\newtheorem{lemma}{Lemma}
\newtheorem{theorem}{Theorem}
\begin{document}

\begin{center}

\Large \textbf{A Generalization of Bernstein-Vazirani Algorithm to Qudit Systems}
\vspace{0.6cm}

\normalsize \textbf{R. Krishna\renewcommand{\thefootnote}{$\star$}\footnote{Email: rajath.r@xaviers.edu.in}, V. Makwana\renewcommand{\thefootnote}{$\dagger$}\footnote{Email: vishesh.makwana@xaviers.edu.in}, A. Suresh\renewcommand{\thefootnote}{$\ddagger$}\footnote{Email: ananda.suresh@xaviers.edu.in}}
\vspace{0.3cm}
\\ \textit{Department of Mathematics, \\St. Xavier's College, \\ Mumbai, India.}
\vspace{0.4cm}

\vspace{0.2cm}
\end{center}

\begin{abstract}
A quantum algorithm to solve the parity problem is better than its most efficient classical counterpart with a separation that is polynomial in the number of queries. This was shown by E. Bernstein and U. Vazirani and was one of the earliest indications that the quantum information processing can outperform the classical one by a significant margin. The problem and its solution both is usually stated for a 2-level system since we generally work with bits/qubits. However, many works have been done generalizing known quantum computing techniques to higher level systems. Following this, we look at a generalization of the Bernstein-Vazirani algorithm implemented on a general qudit system.
\end{abstract}
\vspace{0.4cm}

\begin{keywords}
The Bernstein-Vazirani problem; quantum algorithms; quantum computing; quantum query complexity
\end{keywords}
\markboth{R.Krishna, V.Makwana, A.Suresh}
{Generalization of Bernstein-Vazirani Algorithm} 
\vspace{0.6cm}
\section{Introduction}

Advances in quantum information and computing has bought on a shift in the way we think about information processing. Taking advantage of quantum phenomenon like superposition and entanglement, one can extend the limits of computation that the classical complexity theory predicts\cite{am}. Solution to many problems, which the classical theory predicts cannot be improved beyond a certain limit have shown to be significantly faster with the quantum approach. Shor's factoring algorithm \cite{shor} and Grover's search algorithm \cite{gr} are examples of this speed-up.\cite{zg}

The first indication of the superiority of quantum computers in being able to solve certain tasks came from Deutsch and Jozsa \cite{dj}. Their paper contained strong hints pointing towards the exponential speed-up capabilities of a quantum system when compared to the classical one. Soon after, Bernstein and Vazirani made this more concrete by stating a problem and demonstrating its quantum solution to be having query complexity significantly less than what was possible by any classical approach\cite{bv}. The solution to their problem which had a query complexity of $\Omega(n)$ by the best possible classical algorithm was shown to have $\Omega(1)$ in the quantum realm(polynomial speed-up). Even more impressive was a different version of the problem, called the recursive Bernstein-Vazirani problem which had an exponential complexity in the classical world and a logarithmic one in the quantum(super polynomial speed-up). 

In addition to figuring out quantum algorithms to do various tasks, over the years, people have also worked out how to work with a general `qudit' system instead of the usual 2 level qubit \cite{gg}. This means more information can be stored using fixed number of qudits. Physically this can be implemented, for example, with three energy levels of a hydrogen atom for a qutrit system. The question of how to generalize known quantum algorithms for a qudit system is hence, of theoretical and practical importance\cite{gdj}. Here the Bernstein-Vazirani problem is considered and its generalization to a higher dimensional system is derived and discussed.
\section{The Bernstein-Vazirani problem}

In the Bernstein-Vazirani problem we are given an oracle of the form, $$f_s:\{0,1\}^n\rightarrow\{0,1\};\ \  s\in \{0,1\}^n$$ This takes as input an n-bit string and outputs a single bit. The function is defined to be, $$f_s(x)=s\cdot x \hspace{0.5cm} \mathrm{where} \hspace{0.2cm} x \in \{0,1\}^n$$ Here $s$ is an unknown string and $$s\cdot  x=\left(\sum_{i=1}^{n}s_ix_i\right)\hspace{3pt}\mathrm{mod}\hspace{5pt}2$$ where $s_i$ and $x_i$ are $\mathrm{i^{th}}$ bits of $s$ and $x$ respectively. The Bernstein-Vazirani problem is to find the unknown string $s$ by querying the oracle.

\subsection{A classical algorithm}

To find the hidden string, all we are allowed to do is to query the function with different inputs. To find the first bit of the string, we can use the state `100..0' as our query. If the first bit is 1, the function will output 1 and if 0, it will output 0. Similarly, to find the second bit we use the state `010...0' as our query. Proceeding this way, we can find all bits of the unknown string s. This means if the string has a length n, using this algorithm, we would require n queries to find it. That is, this algorithm has a classical query complexity of $\Omega(n)$. In addition to this, we also notice that since each bit in the string is completely independent of all the other bits, there is no other algorithm that provide a query complexity better than this.    

\section{Quantum Algorithm}

Even though classical complexity theory limits the query complexity by $\Omega(n)$ the Bernstein-Vazirani's solution to the same problem is shown to have quantum query complexity of $\Omega(1)$. This is unexpected and bizarre power of quantum information processing. Described below is this algorithm. 
\subsection{The Bernstein-Vazirani algorithm for a 2-level system}

Perhaps the most crucial step used in any quantum algorithm is Fourier sampling. It involves applying the Hadamard gate to each qubit of an n-qubit system. Fourier sampling on a general computational basis state of an n-qubit system is given by the following transformation,
\begin{equation}
H^{\otimes n}\ket{u}=\frac{1}{\sqrt{2^n}}\sum_{x\in \{0,1\}^n}(-1)^{u\cdot x} \ket{x}
\end{equation}
where $H^{\otimes n}$ indicates the operator obtained by tensoring $H$ n times.

Now, we use the interesting property of Quantum Gates that they are reversible and the Hadamard turns out to be its own inverse. Thus,
$$H^{\otimes n} \left(\frac{1}{\sqrt{2^n}}\sum_{x\in \{0,1\}^n}(-1)^{u\cdot x} \ket{x}\right)=\ket{u} $$

\hspace{-0.53cm}In the Bernstein-Vazirani problem as described in above section, the aim is to find the unknown string $s$. It is easy to see that the Fourier Sampling on the superposition $$\frac{1}{\sqrt{2^n}}\sum_{x\in \{0,1\}^n}(-1)^{s\cdot x} \ket{x}$$ will give $s$. Thus, the problem drops down to setting up the above superposition using the quantum oracle and Fourier sampling it to obtain $s$. 
\subsubsection{Setting up the superposition}
Consider applying Fourier transformation on an n-bit string $\ket{00..0}$. Its output will be an equal superposition of all possible n-bit strings, all having positive phase \cite{ln}. 

$$H^{\otimes n} \ket{00..0} = \frac{1}{\sqrt{2^n}} \sum_{x \in \{0,1\}} \ket{x} $$
\begin{center}
\hspace{-0.6cm}\includegraphics[width=0.5\textwidth]{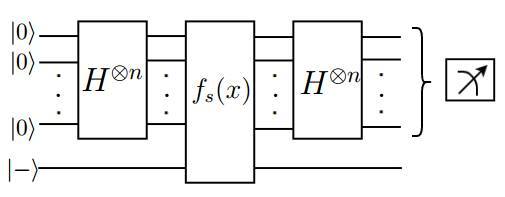}
\vspace{0.35cm}

\hspace{-0.18cm}\small Fig. 1. Quantum Circuit for Berstein Vazi-

\hspace{1.4cm}rani algorithm for a two-level system.
\vspace{0.35cm}
\end{center}
\normalsize This is the input we feed into the oracle $f_s(x)$ as our query. We then take the output from $f_s(x)$ and use it as the control bit in a CNOT operation whose target bit is $\ket{-} = \frac{1}{\sqrt{2}}(\ket{0} - \ket{1})$. 

If the control bit is one, that is, if the output of $f_s(x)$ is one, the CNOT will act on $\ket{-}$ and change it to $-\ket{-}$ and if it is zero, then the target bit will stay as it is. Now, in a tensor product of two quantum states we are free to associate the sign with whichever state we choose to. $$\ket{u} \otimes (-\ket{v})=-(\ket{u} \otimes \ket{v}) = (-\ket{u}) \otimes \ket{v}$$ This means we can associate the negative sign of $\ket{-}$ with $\ket{x}$ in the superposition. This amounts to flipping the phase of the terms of the superposition whenever $u \cdot x = 1$.
In other words, we create the superposition $$\frac{1}{\sqrt{2^n}} \sum_{x \in \{0,1\}^n} (-1)^{s\cdot x} \ket{x}$$ 

\hspace{-0.53cm}In summary these are the transformations to setup the required superposition

$$\ket{00...0} \xrightarrow[]{{\cal H}} \frac{1}{\sqrt{2^n}}\sum_x \ket{x} \xrightarrow[]{\mathrm{use} f_s(x)} \frac{1}{\sqrt{2^n}}\sum_x (-1)^{s \cdot x} \ket{x}$$

\hspace{-0.53cm}As explained earlier, taking the Fourier transform of the last state would give the state $\ket{s}$, which we can then measure.                 

\subsection{Generalized quantum states and gates}

The generalized form of the quantum circuit will be one in which all the quantum states and gates will be replaced by their corresponding higher dimensional versions. The d-dimensional quantum state, instead of being a superposition of the kets $\ket{0}$ and $\ket{1}$, will be one that of the kets $\ket{0}, \ket{1}...\ket{d-1}$. Its phase is going to be of the form $\omega^i$ where $\omega$ is the $dth$ root of unity instead of the usual $\pm1$. During measurement, the generalized state behaves the same way as the normal state does. It collapses into any of its basis states with a probability given by the   absolute square of its coefficient. The quantum states being able to take up more than two levels means more information condensed into a single state. 

Next logical step in our process is to design higher dimensional quantum gates capable of manipulating these sates. Out of many generalizations possible, here we look at what is perhaps the simplest and most straightforward\cite{dg}. The motivation behind defining gates this way is that they allow for the creation of a higher dimensional analogue of the quantum circuit without much modification to the original one. 

Under this, the Hadamard gate $H$ becomes the d-dimensional discrete Fourier transform matrix defined by, 
\begin{equation}
\ket{j} \rightarrow \sum_{s=0}^{d} \omega^{js}\ket{s}
\end{equation}

\hspace{-0.53cm}CNOT becomes the SUM gate. 
\begin{equation}
\ket{i}\ket{j} \rightarrow \ket{i}\ket{(i+j)\hspace{0.1cm} \mathrm{mod} \hspace{0.1cm} d}
\end{equation}

\subsection{Quantum algorithm for a d-level system}

In this section, we provide an algorithm to solve the Bernstein-Vazirani problem for a d-dimensional system. Here the problem changes to finding an unknown string $s \in \{0,1...(d-1)\}^n$ by querying a function, $$f_s(x) = s \cdot x \hspace{0.1cm} \mathrm{mod} \hspace{0.1cm} d$$ given as a quantum oracle.  In order to achieve this we set up the superposition $$\ket{\psi_s} = \frac{1}{\sqrt{d^n}} \sum_x \omega^{x \cdot s} \ket{x}$$ 
${x \in \{0,1...(d-1)\}^n}$ 

\hspace{-0.53cm}using the given oracle. We then apply the tensor product of the d-dimensional Fourier transform into this state to obtain back $s$. 

Below we prove that Fourier sampling on the superposition will give back $s$ and we explain how to set up the superposition.

\begin{lemma}
\label{lemma1}
Primitive $d^{th}$ root of unity satisfy $\sum\limits_{\alpha =0}^{d-1} \omega^{\alpha k} = d\cdot \delta_{k,0}$ where $k \in \mathbb{N}\hspace{0.05cm}\cup\{0\}$
\end{lemma}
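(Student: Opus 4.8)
The natural route is to recognize the sum as a finite geometric series in the ratio $\omega^k$ and split into two cases according to whether $\omega^k$ equals $1$. First I would treat the case $k=0$ (more generally, the case $d \mid k$, which is the one that actually arises in the application since the exponents will be reduced modulo $d$): then $\omega^{\alpha k} = (\omega^k)^\alpha = 1^\alpha = 1$ for every $\alpha$, so the sum is a sum of $d$ ones, namely $d$, matching $d\cdot\delta_{k,0}$.

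For the remaining case I would use the defining property of a \emph{primitive} $d^{\text{th}}$ root of unity, namely that $\omega^m = 1$ holds exactly when $d \mid m$. Hence when $d \nmid k$ we have $\omega^k \neq 1$, and the standard formula for a finite geometric series applies:
$$\sum_{\alpha=0}^{d-1}\omega^{\alpha k} \;=\; \sum_{\alpha=0}^{d-1}(\omega^k)^\alpha \;=\; \frac{(\omega^k)^d - 1}{\omega^k - 1}.$$
The numerator is $(\omega^d)^k - 1 = 1^k - 1 = 0$ while the denominator is nonzero, so the whole sum vanishes, again matching $d\cdot\delta_{k,0}$. Combining the two cases completes the argument.

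The only real subtlety — and the point I would be careful to state explicitly — is the gap between "$k=0$'' and "$k\equiv 0 \pmod d$'': the Kronecker delta as written suggests the former, but the geometric-series computation actually gives $d$ whenever $d\mid k$ and $0$ otherwise. I would note that in every use of this lemma the exponent $k$ is a difference of the form $(x\cdot s - x\cdot s')$ that is implicitly taken modulo $d$, so $k$ may be assumed to lie in $\{0,1,\dots,d-1\}$, and on that range the two readings coincide. Beyond that caveat the proof is entirely routine; there is no genuine obstacle, only the need to justify $\omega^k\neq1$ via primitivity before dividing by $\omega^k-1$.
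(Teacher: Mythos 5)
Your proof follows essentially the same route as the paper's: the case $k=0$ by direct counting and the remaining case by summing the finite geometric series with ratio $\omega^k$, whose numerator $(\omega^d)^k-1$ vanishes. In fact your explicit separation of $d \mid k$ from $d \nmid k$ is slightly more careful than the paper's argument, which divides by $1-\omega^{k}$ for every $k\neq 0$ even though the lemma as stated allows any $k\in\mathbb{N}\cup\{0\}$ (so $\omega^k=1$ can occur with $k\neq 0$); your remark that in the application the exponents are effectively reduced modulo $d$ is exactly the right way to reconcile the statement with the computation.
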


\begin{proof}
\hspace{0.6cm}Consider the summation for a fixed $k$,$$\sum\limits_{\alpha =1}^{d-1} \omega^{\alpha k} = 1+\omega^{k}+\omega^{2k}+...+\omega^{(d-1)k}$$

\hspace{-0.53cm}For $k=0$, it is easy to see that this is equal to $d$.

\hspace{-0.53cm}When $k \neq 0$,
$$\sum\limits_{\alpha=1}^{d-1} \omega^{\alpha k} =\frac{1-(\omega^{k})^d}{1-\omega^{k}}=\frac{1-(\omega^d)^{k}}{1-\omega^{k}}=\frac{1-1}{1-\omega^{k}}=0$$

\hspace{-0.53cm}Or in other words,
\begin{equation}
\sum\limits_{\alpha=1}^{d-1} \omega^{\alpha k} = d \cdot \delta_{k,0} 
\end{equation}
\end{proof}

\subsubsection{Fourier sampling the superposition}
\begin{theorem}
\label{theorem}
The following quantum states defined by,$$\ket{\psi_s} = \frac{1}{\sqrt{d^n}} \sum_x \omega^{s \cdot x} \ket{x}$$ 
\textit{where}  $x\in \{0,1...(d-1)\}^n$\textit{are orthogonal to each other for different values of $s$.}
\end{theorem}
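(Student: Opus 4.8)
The plan is to compute the inner product $\braket{\psi_s}{\psi_t}$ directly and show that it vanishes whenever $s\neq t$. First I would expand
\[
\braket{\psi_s}{\psi_t} = \frac{1}{d^n}\sum_{x}\sum_{y}\overline{\omega^{s\cdot x}}\;\omega^{t\cdot y}\braket{x}{y},
\]
and use orthonormality of the computational basis $\{\ket{x}\}$ to collapse the double sum to the diagonal $x=y$. Since $|\omega|=1$ we have $\overline{\omega^{s\cdot x}}=\omega^{-s\cdot x}$, leaving
\[
\braket{\psi_s}{\psi_t}=\frac{1}{d^n}\sum_{x\in\{0,\dots,d-1\}^n}\omega^{(t-s)\cdot x}.
\]

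The key step is to notice that the exponent $(t-s)\cdot x=\sum_{i=1}^n(t_i-s_i)x_i$ lets the sum over the $n$-tuple $x$ factor into a product of independent one-coordinate sums:
\[
\braket{\psi_s}{\psi_t}=\frac{1}{d^n}\prod_{i=1}^n\left(\sum_{x_i=0}^{d-1}\omega^{(t_i-s_i)x_i}\right).
\]
Each inner factor is exactly the geometric sum treated in Lemma \ref{lemma1} with $k\equiv(t_i-s_i)\bmod d$, so it equals $d\,\delta_{k,0}$. Hence $\braket{\psi_s}{\psi_t}=\prod_{i=1}^n\delta_{(t_i-s_i)\bmod d,\,0}$. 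If $s\neq t$ then $s_i\neq t_i$ for at least one index $i$; because $s_i,t_i\in\{0,\dots,d-1\}$, their difference cannot be a nonzero multiple of $d$, so $(t_i-s_i)\bmod d\neq 0$, that factor vanishes, and the whole product is zero. (Taking $s=t$ instead gives $\braket{\psi_s}{\psi_s}=1$, so the family is in fact orthonormal.)

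The one technical wrinkle to handle carefully is that Lemma \ref{lemma1} is stated for $k\in\mathbb{N}\cup\{0\}$, while $t_i-s_i$ can be negative; this is harmless since $\omega^d=1$ gives $\omega^{(t_i-s_i)x_i}=\omega^{((t_i-s_i)\bmod d)\,x_i}$, so one replaces the exponent by its nonnegative residue before invoking the lemma. Apart from that bookkeeping the proof is just the factorization above, so I do not anticipate a genuine obstacle — the only points that need care are the reduction modulo $d$ and the remark that two distinct strings over the alphabet $\{0,\dots,d-1\}$ must differ in some coordinate by an amount not divisible by $d$.
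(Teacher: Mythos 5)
Your proposal is correct and follows essentially the same route as the paper: expand $\braket{\psi_s}{\psi_t}$, collapse the double sum via orthonormality of the computational basis, factor the remaining sum coordinate-wise, and invoke Lemma \ref{lemma1} on each factor. In fact your version is slightly tighter than the paper's, since writing $\overline{\omega^{s\cdot x}}=\omega^{-s\cdot x}$ and reducing exponents modulo $d$ avoids the paper's case split on the sign of $x\cdot s - x\cdot t$ (which varies with $x$) and makes explicit why $(t_i-s_i)\bmod d\neq 0$ when $s_i\neq t_i$.
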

\vspace{0.5cm}
\begin{proof}
\hspace{0.6cm}Consider,$$\ket{\psi_s} = \frac{1}{\sqrt{d^n}} \sum_x \omega^{s \cdot x} \ket{x}$$ 
$$\ket{\psi_t} = \frac{1}{\sqrt{d^n}} \sum_y \omega^{t \cdot y} \ket{y}$$ 
\hspace{4.73cm}$\braket{\psi_s}{\psi_t} = \frac{1}{d^n} \sum_x (\omega^*)^{x \cdot s} \bra{x} \frac{1}{d^n} \sum_y \omega^{y \cdot t} \ket{y}$
$$=\frac{1}{d^n} \sum_{x,y} (\omega^*)^{x \cdot s} (\omega)^{y \cdot t}  \braket{x}{y}$$
$$=\hspace{0.25cm} \frac{1}{d^n} \sum_{x,y} (\omega^*)^{x \cdot s} (\omega)^{y \cdot t} \delta_{x,y}$$
\begin{equation}
\hspace{-0.82cm}
=\frac{1}{d^n} \sum_x (\omega^*)^{x \cdot s} (\omega)^{x \cdot t}   
\end{equation}

\hspace{-0.66cm} As $\omega \omega^* = 1$, we have,

\hspace{-0.53cm}Case 1: $x \cdot s > x \cdot t$

$$\hspace{-1cm}\braket{\psi_s}{\psi_t}=\frac{1}{d^n} \sum_x (\omega^*)^{x \cdot s - x \cdot t} $$ 

\begin{equation}
\hspace{0.15cm}
=\frac{1}{d^n} \sum_x (\omega^*)^{x \cdot (s-t)}
\end{equation}

\hspace{-0.53cm}Case 2: $x \cdot s < x \cdot t$ 

$$\braket{\psi_s}{\psi_t}=\frac{1}{d^n} \sum_x (\omega)^{x \cdot t-x \cdot s} $$ 

\begin{equation}
\hspace{1.05cm}
=\frac{1}{d^n} \sum_x (\omega)^{x \cdot (t-s)}
\end{equation}

\hspace{-0.53cm}Now consider the sum  for a fixed $k$, $$\sum_x \omega^{x \cdot k} = \sum_x (\omega)^{x_1k_1} (\omega)^{x_2k_2} \dotso (\omega)^{x_nk_n}$$
\begin{equation}
\hspace{1.6cm}
= \sum_{x_1} \omega^{x_1k_1}\sum_{x_2} \omega^{x_2k_2} \dotso \sum_{x_n} \omega^{x_nk_n}
\end{equation}

\hspace{-0.53cm}Here $x_i,k_i \in \{0,1...(d-1)\}$

\vspace{0.3cm}
\hspace{-0.53cm}Using eq. (4) in eq. (8) we get,
$$\sum_x \omega^{x \cdot k} = (d \cdot \delta_{k_1,0}) (d \cdot \delta_{k_2,0})...(d \cdot \delta_{k_d,0})=d^n \delta_{k,0}$$ 

\hspace{-0.53cm}Plugging this into (5) and (6) with $k=s-t$ and $k=t-s$ respectively we have, $$\braket{\psi_s}{\psi_t}= \delta_{s-t,0}\quad \mathrm{when} \hspace{0.2cm} x \cdot s>x\cdot t$$
$$\braket{\psi_s}{\psi_t}= \delta_{t-s,0}\quad \mathrm{when} \hspace{0.2cm} x\cdot t>x\cdot s$$ 

\hspace{-0.53cm}which implies, 
\begin{equation}
\braket{\psi_s}{\psi_t} = \delta_{s,t}
\end{equation}

\hspace{-0.53cm}Or in other words, the states are orthogonal.
\end{proof}

\vspace{0.8cm}
\hspace{-0.53cm}Now,the tensor product of the d-dimensional discrete Fourier transform matrix is given by, 
\begin{equation}
{\cal F}^{\otimes n} = \frac{1}{\sqrt{d^n}} \sum_{x,y} \omega^{x \cdot y} \ketbra{y}{x}
\end{equation}

\hspace{-0.53cm}This can be expressed as $${\cal F}^{\otimes n} = \sum_y \ketbra{y}{\psi_y}$$ Now we apply ${\cal F}^{\otimes n}$ to $\ket{\psi_s}$,

$${\cal F}^{\otimes n} \ket{\psi_s} = \sum_y \ket{y}\braket{\psi_y}{\psi_s}$$which is the same as
\begin{equation}
{\cal F}^{\otimes n} \ket{\psi_s} = \ket{s}
\end{equation}
Hence we show that the Fourier sampling on the state $\ket{\psi_s}$ gives $\ket{s}$.

\hspace{-0.53cm}So if we manage to set up the superposition $$\frac{1}{\sqrt{d^n}} \sum_x \omega^{s \cdot x} \ket{x}$$
using the given oracle, we can then Fourier sample it to get the required hidden string $s$.

\subsubsection{Setting up the superposition}

The first step in setting up the superposition is Fourier sampling the state $\ket{00..0}$, which gives $$\ket{\Psi} = \frac{1}{\sqrt{d^n}}\sum_x \ket{x} \hspace{1cm}; \hspace{0.2cm} x \in \{0,1...(d-1)\}^n$$
\begin{center}
\hspace{-0.6cm}\includegraphics[width=0.5\textwidth]{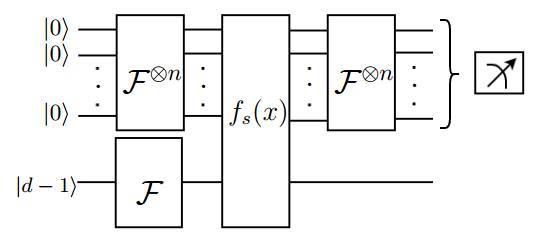}
\vspace{0.35cm}

\hspace{-0.18cm}\small Fig. 2. Quantum Circuit for Berstein Vazi-

\hspace{1.4cm}rani algorithm for a d-level system.
\vspace{0.35cm}
\end{center}

\normalsize \hspace{-0.53cm}We now take the oracle $f_s(x)$ and feed this superposition as its query. The output from $f_s(x)$ is used as the control bit of a SUM gate whose target bit is the superposition
\begin{equation}
\ket{\phi}=\omega^d \ket{0} + \omega^{d-1} \ket{1}+...+\omega \ket{d-1}
\end{equation}

\vspace{0.3cm}
\hspace{-0.53cm}Now consider applying the SUM gate to the state $\ket{\phi}$. Each term in $\ket{\phi}$ is of the form $\omega^{d-j}\ket{j}$.

\begin{equation}
\omega^{d-j}\ket{j} \xrightarrow[]{\mathrm{SUM}_i} \omega^{d-j}\ket{(j+i) \hspace{0.1cm} \mathrm{mod} \hspace{0.1cm} d}
\end{equation}

\hspace{-0.53cm}Setting $i+j = k \Rightarrow d-j=d+i-k$

\hspace{-0.53cm}Hence (9) becomes, $$\omega^{d-j} \ket{j}\xrightarrow[]{\mathrm{SUM}_i}
\omega^i \omega^{d-k} \ket{k\hspace{0.1cm} \mathrm{mod} \hspace{0.1cm} d}$$

\hspace{-0.53cm}Now, when $k<d$ we have $\ket{k\hspace{0.1cm} \mathrm{mod} \hspace{0.1cm} d}=\ket{k}$ and thus, we get 
\begin{equation}
\omega^{d-j} \ket{j}\xrightarrow[]{\mathrm{SUM}_i} \omega^i \omega^{d-j}\ket{j}
\end{equation}
as $k$ is just a dummy variable.

\hspace{-0.53cm}Also, as $i$ and $j$ are bounded above by $d-1$, $k$ is strictly less than $2d$. Hence, when $d\leq k< 2d$ we have $\ket{k\hspace{0.1cm} \mathrm{mod} \hspace{0.1cm} d}=\ket{k-d}$. 

\hspace{-0.53cm}Now, if we take $k-d=m$ then, $$\omega^{d-k} \ket{k}=\omega^{-m}\ket{m}=\omega^{d-m}\ket{m}$$ and we get back equation (10).

\hspace{-0.53cm}Hence applying the SUM gate on $\ket{\phi}$ will give,
$$\mathrm{SUM} \hspace{0.1cm} \ket{\phi} = \omega^i\ket{\phi}=\omega^{s \cdot x} \ket{\phi}$$

\hspace{-0.53cm}Just as we did in the case for a 2-level system, we can associate this phase with any part of the tensor product.

$$\ket{\Psi} \otimes (\omega^{s \cdot x} \ket{\phi})=\omega^{s \cdot x}(\ket{\Psi} \otimes \ket{\phi})=(\omega^{s \cdot x}\ket{\Psi}) \otimes \ket{\phi}$$  
Hence $\ket{\Psi}$ will become,
\begin{equation}
\ket{\Psi} = \frac{1}{\sqrt{d^n}}\sum_x \omega^{s \cdot x}\ket{x}
\end{equation}

\hspace{-0.53cm}As explained in the previous section, Fourier sampling this state will give $\ket{s}$. 

\section{Conclusion}

\vspace{-0.10cm}
A general Bernstein-Vazirani parity problem was defined and the corresponding quantum algorithm was generalized. This was done by replacing the quantum gates for qubits with the generalized quantum gates. It was seen that the superpositions $\ket{\psi_s}$ and $\ket{\psi_t}$ are orthogonal and this property was used to retrieve the unknown string s by Fourier Sampling them. The quantum query complexity of the original 2-dimensional problem is $\Omega(1)$. Since, even in the generalized version as only one input is to be given, the query complexity remains the same as in the 2-dimensional case.        

Algorithms such as this clearly depict the power of quantum computing. Even though quantum computing has been centered around two-level systems and most of the algorithms have been for qubits, higher dimensional systems might play a major role in the same. Studying the generalizations of already existing quantum algorithms and developing new ones for the higher level systems will help us realize the potential of qudit systems. Future work in this direction could be for example, a generalization of the recursive Bernstein-Vazirani algorithm for qudit systems.

\end{document}